\newtheorem{Theorem}{Theorem}
\newtheorem{Assumption}[Theorem]{Assumption}
\newtheorem{Corollary}[Theorem]{Corollary}
\title{A fuzzy feedback linearization scheme applied to vibration control of a smart structure}
\author{Roberta Varela de Albuquerque Herôncio, João Deodato Batista dos Santos,\\
Wallace Moreira Bessa, Aline Souza de Paula, Marcelo Amorim Savi}
\date{}
\begin{document}
    
\maketitle

\abstract{

Smart structures are usually designed with a stimulus-response mechanism to mimic 
the autoregulatory process of living systems. In this work, in order to simulate this natural and self-adjustable 
behavior, a fuzzy feedback linearization scheme is applied to a shape memory two-bar truss. This structural 
system exhibits both constitutive and geometrical nonlinearities presenting the snap-through behavior and chaotic 
dynamics. On this basis, a nonlinear controller is employed for vibration suppression in the chaotic smart truss. 
The control scheme is primarily based on feedback linearization and enhanced by a fuzzy inference system to cope 
with modeling inaccuracies and external disturbances. The overall control system performance is evaluated by means 
of numerical simulations, promoting vibration reduction and avoiding snap-through behavior.

}

\section{INTRODUCTION}

The term smart structures and systems has been used to identify mechanical systems that are capable of 
changing their geometry or physical properties with the purpose of performing a specific task. They must be 
equipped with sensors and actuators that induce such controlled alterations. Several applications in different 
fields of sciences and engineering have been developed with this innovative idea, employing some of the so-called 
smart materials. Shape memory alloys (SMAs), piezoelectric materials and magneto-rheological fluids are some of 
the smart materials largely employed in structural systems. 

Specifically, shape memory alloys are being used in situations where high force, large strain, and low frequency 
structural control are needed. SMA actuators are easy to manufacture, relatively lightweight, and able of producing 
high forces or displacements. Self-actuating fasteners, thermally actuator switches and several bioengineering 
devices are some examples of these SMA applications. Aerospace technology are also using SMAs for distinct purposes 
as space savings achieved by self-erectable structures, stabilizing mechanisms, non-explosive release devices, among 
others. Micromanipulators and robotics actuators have been built employing SMAs properties to mimic the smooth motions 
of human muscles. Moreover, SMAs are being used as actuators for vibration and buckling control of flexible structures. 

SMA thermomechanical behavior is related to thermoelastic martensitic transformations. The shape 
memory effect is a phenomenon where apparent plastically deformed objects may recover their original 
form after going through a proper heat treatment. The pseudoelastic behavior is characterized by complete 
strain recovery accompanied by large hysteresis in a loading-unloading cycle \citep{otsuka1}. Fibers of 
shape-memory alloys can be used to fabricate hybrid composites exhibiting these two different but related 
material behaviors. Detailed description of the shape memory effect and other phenomena associated with 
martensitic phase transformations, as well as examples of applications in the context of smart structures, 
may be found in references \citep{lagoudas1,paiva1,machado1,rogers1,shaw1}.

The investigation of SMA structures has different approaches. The finite element method is an important tool to 
this aim. \citet{auricchio1} proposed a three-dimensional finite element model. \citet{lagoudas2} considered the 
thermomechanical response of a laminate with SMA strips. \citet{cava1} considered SMA bars and \citet{bandeira1} 
treated truss structures. The response of SMA beams was treated by \citet{collet1}, which analyzes the dynamical 
response, as well as \citet{auricchio2}. \citet{auricchio4,auricchio3} presented a solid finite element to describe 
the thermo-electro-mechanical problem that is used to simulate different SMA composite applications. Dual kriging 
interpolation has been employed with finite element method in order to describe the shape memory behavior in different 
reports \citep{trochu1}. \citet{masud1}, \citet{bhattacharyya1}, \citet{liew1} are other contributions in this field.

The two-bar truss, also known as the von Mises truss, is an important archetypal model, largely employed to 
evaluate stability characteristics of framed structures as well as of flat arches, and of many other physical 
phenomena associated with bifurcation buckling \citep{bazant1}. The nonlinear dynamics of this system may exhibit 
a number of interesting, complex behaviors. The snap-through behavior, represented by a displacement jump, is a 
classical example of the complexity behind this simple structure. 

The dynamic behavior of the two-bar truss is even richer when material nonlinearities are considered. In particular, 
the present contribution deals with two-bar trusses made from shape memory materials. \citet{savi6} and \citet{savi5} 
presented numerical investigations of this kind of structure by considering different constitutive models to describe 
the thermomechanical behavior of the SMAs. 

Due to its simplicity, feedback linearization scheme is commonly applied in industrial control systems, specially
in the field of industrial robotics. The main idea behind this control method is the development of a control law 
that allows the transformation of the original dynamical system into an equivalent but simpler one \citep{slotine1}. 
Although feedback linearization represents a very simple approach, an important handicap is the requirement of a 
perfectly known dynamical system, in order to ensure the exponential convergence of the tracking error.

Intelligent control, on the other hand, has proven to be a very attractive approach to cope with uncertain nonlinear systems 
\citep{tese,rsba2010,cobem2005,nd2012,Bessa2014,Bessa2017,Bessa2018,Bessa2019,Deodato2019,Lima2018,Lima2020,Lima2021}. 
By combining nonlinear control techniques, such as feedback linearization or sliding modes, with adaptive intelligent algorithms, 
for example fuzzy logic or artificial neural networks, the resulting intelligent control strategies can deal with the nonlinear 
characteristics as well as with modeling imprecisions and external disturbances that can arise.

On this basis, much effort has been made to combine feedback linearization with intelligent algorithms in order to 
improve the trajectory tracking of uncertain nonlinear systems. On this basis, in order to enhance the tracking 
performance, \citet{ijccc2013} used a fuzzy inference system with the state variables in the premise of the rules
to approximate the unknown system dynamics. However, the adoption of all state variables in the premise of the rules 
is a drawback of this approach. As for instance, for higher-order systems the number of fuzzy sets and fuzzy rules 
becomes incredibly large, which compromises the applicability of this technique. 

In this work, to reduce the number of fuzzy sets and rules and consequently simplify the design process, only on variable 
(an error measure) instead of the state variables, is adopted in the premise of the fuzzy rules. A polynomial constitutive 
model is assumed to describe the behavior of the shape memory bars. Although this model is simple and does not present a 
proper description of the hysteretic behavior, it can qualitatively represent the general SMA behavior. This system has a 
rich dynamic response and can easily reach a chaotic behavior even at moderate loads and frequencies \citep{savi5}. 
Numerical simulations are carried out in order to demonstrate the control system performance.

The main goal is the vibration reduction, avoiding some critical responses as snap-through behavior. A linear actuator is 
employed to help this control procedure and therefore, the SMA actuation is not employed for the control purposes. In this 
regard, we are investigating an SMA structure that needs an appropriate control using external actuators. It is important 
to highlight that SMA properties are being used to achieve other goals than control. This situation is common in distinct 
applications that include aerospace systems as self-erectable structures.

\section{DYNAMIC MODEL}

The two-bar truss is depicted in Figure~\ref{Fig1}. This plane, framed structure, is formed by two identical bars, free 
to rotate around their supports and at the joint.

\begin{figure}[htb]
\centering
\includegraphics[width=0.4\textwidth]{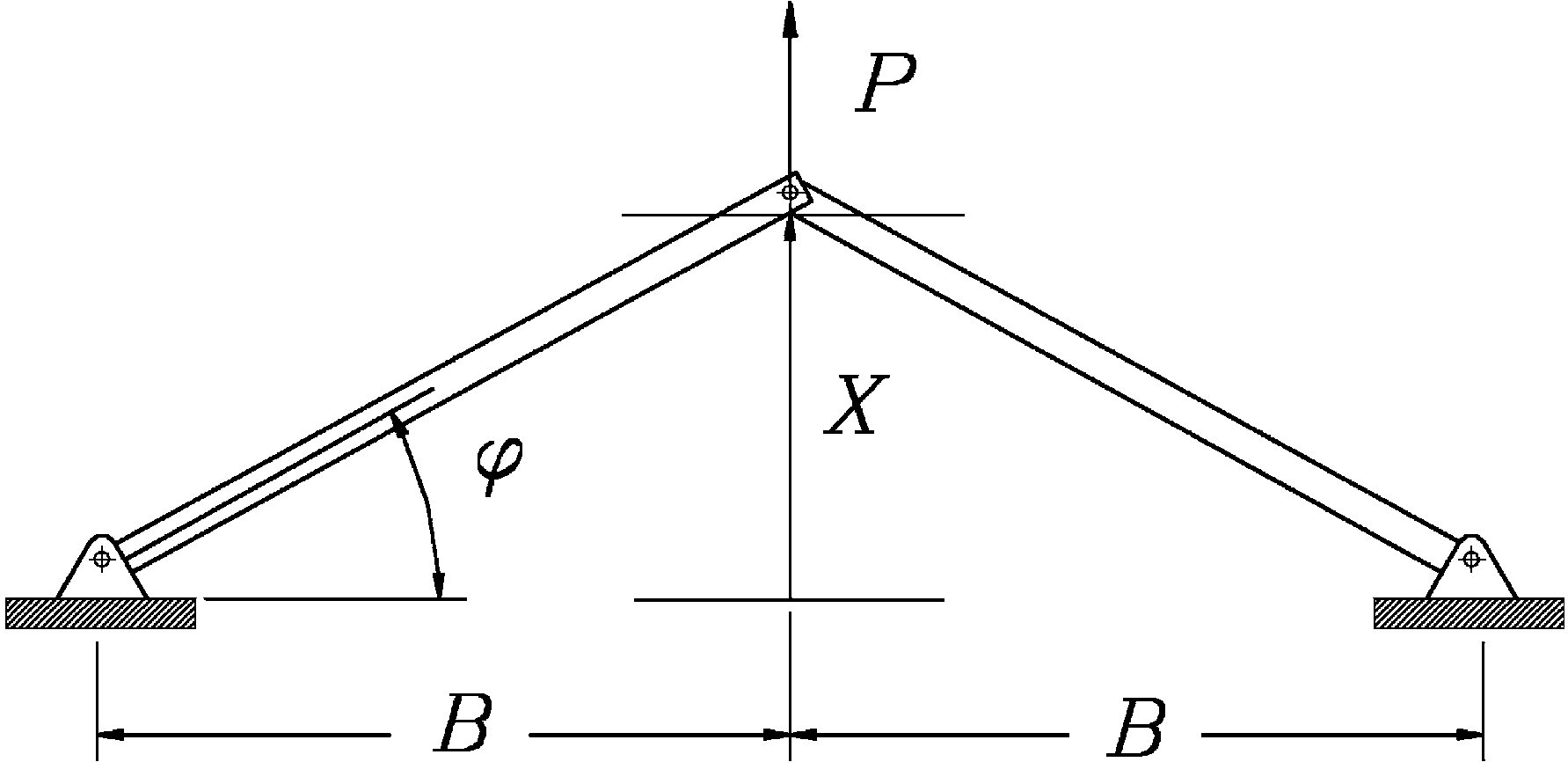}
\caption{Two-bar truss (von Mises truss).}
\label{Fig1}
\end{figure}

In the present investigation, we consider a shape memory two-bar truss where each bar presents the shape memory and 
pseudoelastic effects. The two identical bars have length $L$ and cross-sectional area $A$. They form an angle $\varphi$ 
with a horizontal line and are free to rotate around their supports and at the joint, but only on the plane formed by 
the two bars (Figure \ref{Fig1}). The critical Euler load of both bars is assumed to be sufficiently large so that 
buckling will not occur in the simulations reported here.

We further assume that the structure's mass is entirely concentrated at the junction between the two bars. Hence, the 
structure is divided into segments without mass, connected by nodes with lumped mass that is determined by static 
considerations. We consider only symmetric motions of the system, which implies that the concentrated mass, $m$, can 
only move vertically. The symmetric, vertical displacement is denoted by $X$. Under these assumptions, the dynamic 
behavior is expressed through the following equation of motion

\begin{equation}
-2F\sin\varphi -c\dot{X}+ P = m\ddot{X}
\label{eq:motion1}
\end{equation}

\noindent
where $F$ is the force on each bar, $P$ is an external force and $c\dot{X}$ is a linear viscous damping term used to 
represent all dissipation mechanisms. 

There are several works dedicated to the constitutive description of the thermomechanical behavior of shape memory alloys 
\citep{lagoudas1,paiva1}. In this article, we employ polynomial constitutive model to describe the thermomechanical behavior 
of the SMA bars \citep{falk1,muller1}. Despite the simplicity of this model, it allows an appropriate qualitative description 
of the dynamical response of the system. Its major drawback is the hysteresis description. In this regard, for control purposes,
it should be appropriate with robust controllers that could deal with unmodelled dynamics. Here, dissipation process is 
represented by an equivalent viscous damping term.

Polynomial model is concerned with one-dimensional media employing a sixth degree polynomial free energy function in terms 
of the uniaxial strain, $\varepsilon$. The form of the free energy is chosen in such a way that its minima and maxima 
are respectively associated with the stability and instability of each phase of the SMA. As it is usual in one-dimensional 
models proposed for SMAs \citep{savi4}, three phases are considered: austenite (A) and two variants of martensite (M+, M-). 
Hence, the free energy is chosen such that for high temperatures it has only one minimum at vanishing strain, representing 
the equilibrium of the austenitic phase. At low temperatures, martensite is stable, and the free energy must  have two minima 
at non-vanishing strains. At intermediate temperatures, the free energy must have equilibrium  points corresponding to both 
phases. Under these restrictions, the uniaxial stress, $\sigma$, is a fifth-degree polynomial of the strain \citep{savi4}, i.e.

\begin{equation}
\sigma = a_1(T-T_M)\varepsilon - a_2\varepsilon^3 + a_3\varepsilon^5
\label{eq:stress}
\end{equation}

\noindent
where $a_1$, $a_2$ and $a_3$ are material constants, and $T$ the temperature, while $T_M$ is the temperature 
below which the martensitic phase is stable. If $T_A$ is defined as the temperature above which austenite is 
stable, and the free energy has only one minimum at zero strain, it is possible to write the following condition,

\begin{equation}
T_A = T_M + \frac{1}{4}\frac{a_2^2}{a_1a_3}
\label{eq:condition}
\end{equation}

Therefore, the constant $a_3$ may be expressed in terms of other constants of the material. Now, the following 
strain definition is considered, 

\begin{equation}
\varepsilon = \frac{L}{L_0} - 1 = \frac{\cos\varphi_0}{\cos\varphi} - 1
\label{eq:strain}
\end{equation}

\noindent
with $L_0$ and $\varphi_0$ representing the nominal values of $L$ and $\varphi$, respectively. 

At this point, we can use the constitutive equation (\ref{eq:stress}) together with kinematic equation (\ref{eq:strain}) 
into the equation of motion (\ref{eq:motion1}), obtaining the governing equation of the SMA two-bar truss:

\begin{equation}
\begin{split}
m\ddot{X}&+c\dot{X}+\frac{2A}{L_0}X\Big\{[a_1(T-T_M)-3a_2+5a_3]+[-a_1(T-T_M)+a_2-a_3]L_0(X^2+B^2)^{-1/2}+\\
&+[3a_2-10a_3]\frac{1}{L_0}(X^2+B^2)^{1/2}+[-a_2+10a_3]\frac{1}{L_0^2}(X^2+B^2)+\\
&-\frac{5a_3}{L_0^3}(X^2+B^2)^{3/2}+\frac{a_3}{L_0^4}(X^2+B^2)^2\Big\}=P(t)
\label{eq:motion2}
\end{split}
\end{equation}

\noindent
where $B$ is the horizontal projection of each truss bar (Figure \ref{Fig1}).

Considering a periodic excitation $P=P_0\sin(\omega t)$, equation (\ref{eq:motion2}) may be written in non-dimensional 
form as

\begin{equation}
\begin{split}
x'= &\;y\\
y'= &\;\gamma\sin(\Omega\tau)-\xi y+x\big\{-[(\theta-1)-3\alpha_2+5\alpha_3]+\\
&+[(\theta-1)-\alpha_2+\alpha_3](x^2+b^2)^{-1/2}
-[3\alpha_2-10\alpha_3](x^2+b^2)^{1/2}+\\
&+[-\alpha_2+10\alpha_3](x^2+b^2)
+5\alpha_3(x^2+b^2)^{3/2}-\alpha_3(x^2+b^2)^2\big\}
\label{eq:motion3}
\end{split}
\end{equation}

\noindent
where $\xi$ is a non-dimensional viscous damping coefficient. The dissipation due to hysteretic effect may 
be considered by assuming an equivalent viscous damping related to this parameter. Moreover, the following 
non-dimensional parameters are considered: 

\begin{displaymath}
\displaystyle
\begin{split}
&x=\frac{X}{L},\quad \gamma=\frac{P_0}{mL_0\omega_0^2},\quad \omega_0^2=\frac{2Aa_1T_M}{mL_0},\quad
\Omega=\frac{\omega}{\omega_0},\quad \tau=\omega_0t,\\
&\theta=\frac{T}{T_M},\quad \alpha_2=\frac{a_2}{a_1T_M},\quad \alpha_3=\frac{a_3}{a_1T_M},\quad 
b=\frac{B}{L_0} \quad\mbox{and}\quad (\cdot)'=\frac{\mathrm d(\cdot)}{\mathrm d\tau}
\end{split}
\end{displaymath}

\section{CONTROLLER DESIGN}

\subsection{Feedback linearization}

Consider a class of $n^\mathrm{th}$-order nonlinear systems:

\begin{equation}
x^{(n)}=f(\mathbf{x},t)+b(\mathbf{x},t)u+d
\label{eq:system}
\end{equation}

\noindent
where $u$ is the control input, the scalar variable $x$ is the output of interest, $x^{(n)}$ is the $n$-th time 
derivative of $x$, $\mathbf{x}=[x,\dot{x},\ldots,x^{(n-1)}]$ is the system state vector, $f,b:\mathbb{R}^n\to
\mathbb{R}$ are both nonlinear functions and $d$ is assumed to represent all uncertainties and unmodeled dynamics 
regarding system dynamics, as well as any external disturbance that can arise. 

In respect of the disturbance-like term $d$, the following assumption will be made:

\begin{Assumption}
The disturbance $d$ is unknown but continuous and bounded, \textit{i.\,e}.\ $|d|\le\delta$.
\label{as:limd}
\end{Assumption}

Let us now define an appropriate control law based on conventional feedback linearization scheme that ensures the 
tracking of a desired trajectory $\mathbf{x}_d=[x_d,\dot{x}_d,\ldots,x^{(n-1)}_d]$, \textit{i.\,e}.\ the controller 
should assure that $\mathbf{\tilde{x}}\rightarrow0$ as $t\rightarrow\infty$, where $\mathbf{\tilde{x}}=\mathbf{x}
-\mathbf{x}_d=[\tilde{x},\dot{\tilde{x}},\ldots,\tilde{x}^{(n-1)}]$ is the related tracking error. 

On this basis, assuming that the state vector $\mathbf{x}$ is available to be measured and system dynamics is 
perfectly known, \textit{i.\,e}.\ there is no modeling imprecision nor external disturbance ($d=0$) and the functions 
$f$ and $b$ are well known, with $|b(\mathbf{x},t)|>0$, the following control law:

\begin{equation}
u=b^{-1}(-f+x^{(n)}_d-k_0\tilde{x}-k_1\dot{\tilde{x}}-\cdots-k_{n-1}\tilde{x}^{(n-1)})
\label{eq:lawfl}
\end{equation}

\noindent 
guarantees that $\mathbf{x}\rightarrow\mathbf{x}_d$ as $t\rightarrow\infty$, if the coefficients $k_i$ $(i=0,2,
\ldots,n-1)$ make the polynomial $p^n+k_{n-1}p^{n-1}+\cdots+k_0$ a Hurwitz polynomial \citep{slotine1}.

The convergence of the closed-loop system could be easily established by substituting the control law (\ref{eq:lawfl}) 
in the nonlinear system (\ref{eq:system}). The resulting dynamical system could be rewritten by means of the tracking 
error:

\begin{equation}
\tilde{x}^{(n)}+k_{n-1}\tilde{x}^{(n-1)}+\ldots+k_1\dot{\tilde{x}}+k_0\tilde{x}=0
\label{eq:cl}
\end{equation}

\noindent 
where the related characteristic polynomial is Hurwitz.

The characteristic polynomial could be assured to be Hurwitz by defining $\mathbf{k^\mathrm{T}\tilde{x}}=k_{n-1}\tilde{x}^{(n-1)}
+\ldots+k_1\dot{\tilde{x}}+k_0\tilde{x}$, where $\mathbf{k}=[c_0\lambda^n, c_1\lambda^{n-1},\dots,c_{n-1}\lambda]$, $\lambda$ is 
a strictly positive constant and $c_i$ states for binomial coefficients, \textit{i.\,e}.\

\begin{equation}
c_i=\binom{n}{i}=\frac{n!}{(n-i)!\:i!}\:,\quad i=0,1,\ldots,n-1 
\label{eq:binom}
\end{equation}

Since in real-world applications the nonlinear system (\ref{eq:system}) is often not perfectly known, the control law 
(\ref{eq:lawfl}) based on conventional feedback linearization is not sufficient to ensure the exponential convergence 
of the tracking error to zero. 

Thus, we propose the adoption of fuzzy inference system within the control law, in order to compensate for $d$ 
and to enhance the feedback linearization controller.

\subsection{Fuzzy inference system}

Because of the possibility to express human experience in an algorithmic manner, fuzzy logic has been largely 
employed in the last decades to both control and identification of dynamical systems. 

\begin{center}
\textit{If $s$ is $S_r$ then $\hat{d}_r=\hat{D}_r$}\quad;\quad$r=1,2,\ldots,N$ 
\end{center}

\noindent
where $s=k_{n-1}\tilde{x}^{(n-1)}+\ldots+k_1\dot{\tilde{x}}+k_0\tilde{x}$ represents a combined tracking error measure,
$S_r$ are fuzzy sets, whose membership functions could be properly chosen, and $\hat{D}_r$ is the output value of each 
one of the $N$ fuzzy rules.

At this point, it should be highlighted that the adoption of a combined tracking error measure $s$ in the premise of 
the rules, instead of the state variables as in \citep{ijccc2013}, leads to a smaller number of fuzzy sets and rules, 
which simplifies the design process. Considering that external disturbances are independent of the state variables, 
the choice of a combined tracking error measure $s$ also seems to be more appropriate in this case.

Considering that each rule defines a numerical value as output $\hat{D}_r$, the final output $\hat{d}$ can be computed 
by a weighted average: 

\begin{equation}
\hat{d}(s) = \mathbf{\hat{D}}^{\mathrm{T}}\mathbf{\Psi}(s)
\label{eq:dcvector}
\end{equation}

\noindent
where, $\mathbf{\hat{D}}=[\hat{D}_1,\hat{D}_2,\dots,\hat{D}_N]$ is the vector containing the attributed values 
$\hat{D}_r$ to each rule $r$, $\mathbf{\Psi}(s)=[\psi_1(s),\psi_2(s),\dots,$ $\psi_N(s)]$ is a vector with components 
$\psi_r(s)= w_r/\sum_{r=1}^{N}w_r$ and $w_r$ is the firing strength of each rule.

\subsection{Fuzzy feedback linearization}

Considering that fuzzy logic can perform universal approximation \citep{kosko1}, we propose the adoption of a TSK 
fuzzy inference system within the feedback linearization controller to compensate for modeling inaccuracies and 
consequently enhance the trajectory tracking of uncertain nonlinear systems.

Therefore, the control law with the fuzzy compensation scheme can be stated as follows

\begin{equation}
u=b^{-1}[-f+x^{(n)}_d-k_0\tilde{x}-k_1\dot{\tilde{x}}-\cdots-k_{n-1}\tilde{x}^{(n-1)}-\hat{d}(s)]
\label{eq:lawfuzzy}
\end{equation}

\noindent
and the related closed-loop system is:

\begin{equation}
\tilde{x}^{(n)}+k_{n-1}\tilde{x}^{(n-1)}+\ldots+k_1\dot{\tilde{x}}+k_0\tilde{x}=\tilde{d}
\label{eq:clwd}
\end{equation}

\noindent
with $\tilde{d}=\hat{d}-d$.

\begin{Theorem}
\label{th:theo1}
Consider the uncertain nonlinear system (\ref{eq:system}) and Assumption~\ref{as:limd}, then the fuzzy feedback 
linearization controller defined by (\ref{eq:dcvector}) and (\ref{eq:lawfuzzy}) ensures the exponential convergence 
of the tracking error vector to a closed region $\Omega=\{\mathbf{x}\in\mathbb{R}^n\:|\:|\tilde{x}^{(i)}|\le\zeta_i
\lambda^{i-n}\varepsilon,i=0,1,\ldots,n-1\}$, with $\zeta_i$ defined by (\ref{eq:zeta}).

\begin{equation}
\zeta_i = \left\{\begin{array}{cl}
1&\mbox{for}\quad i=0 \\
1+\sum^{i-1}_{j=0}\binom{i}{j}\zeta_j&\mbox{for}\quad i=1,2,\ldots,n-1.
\end{array}\right.
\label{eq:zeta}
\end{equation}

\end{Theorem}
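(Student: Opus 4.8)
The plan is to analyze the closed-loop error dynamics \eqref{eq:clwd} through the standard sliding-variable change of coordinates and then bound the transient and steady-state behavior of each error component. First I would introduce the combined error $s = k_{n-1}\tilde{x}^{(n-1)}+\ldots+k_1\dot{\tilde{x}}+k_0\tilde{x} = \tilde{x}^{(n-1)} + \binom{n-1}{n-2}\lambda\tilde{x}^{(n-2)} + \cdots + \lambda^{n-1}\tilde{x}$, which (by the binomial choice of $\mathbf{k}$) factors the characteristic polynomial as $(p+\lambda)^{n-1}$ acting on $\tilde{x}$, so that $s$ satisfies the first-order equation $\dot{s} = -\lambda^{?}\cdots$ — more precisely, \eqref{eq:clwd} becomes $\dot{s} + \lambda s = \tilde{d}$ after the reduction, wait: since the full polynomial is degree $n$ with the coefficients $k_i$ spanning $\lambda^n$ down to $\lambda$, and $s$ itself is degree $n-1$ in the $\tilde{x}^{(i)}$, the cleanest route is to note that \eqref{eq:clwd} reads $\dot{s} = \tilde{d}$ when $s$ is defined as above with the $(p+\lambda)^{n-1}$ structure, but then there is no $\lambda s$ damping term. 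I would therefore instead take $s = (\tfrac{d}{dt}+\lambda)^{n-1}\tilde{x}$, giving $\dot s + \lambda s$ equal to $(\tfrac{d}{dt}+\lambda)^{n}\tilde x = \tilde x^{(n)} + k_{n-1}\tilde x^{(n-1)} + \cdots + k_0\tilde x = \tilde d$. Hence $\dot s = -\lambda s + \tilde d$.

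With $\dot{s} = -\lambda s + \tilde{d}$ in hand, the key bound follows from Assumption~\ref{as:limd} together with the fuzzy-approximation property: since $\hat d(s)$ is a universal approximator and $d$ is bounded by $\delta$, one assumes (this is the usual hypothesis in this line of work, and should be stated as the approximation error being bounded by some $\varepsilon$) that $|\tilde d| = |\hat d - d| \le \varepsilon$. Then $\tfrac{1}{2}\tfrac{d}{dt}s^2 = s\dot s = -\lambda s^2 + s\tilde d \le -\lambda s^2 + |s|\varepsilon$, so $s^2$ decreases whenever $|s| > \varepsilon/\lambda$; a comparison-lemma argument gives exponential convergence of $|s|$ to the interval $|s|\le \varepsilon$ (after absorbing constants, or to $\varepsilon/\lambda$ depending on normalization — I would fix the constant so the stated region matches). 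This handles the $i=n-1$-type bound at the level of the sliding variable.

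Next I would propagate the bound on $s$ down to each individual error component $\tilde{x}^{(i)}$. Since $s = (\tfrac{d}{dt}+\lambda)^{n-1}\tilde{x}$, the signal $\tilde{x}$ is the output of the stable filter $1/(p+\lambda)^{n-1}$ driven by $s$; a standard input-to-state estimate (e.g.\ the one in Slotine and Li \citep{slotine1}) gives $|\tilde{x}^{(i)}| \le \zeta_i \lambda^{i-n}\varepsilon$ once $|s|\le\varepsilon$, where the $\zeta_i$ obey the recursion in \eqref{eq:zeta}. I would derive this recursion by induction on $i$: writing $\tilde x^{(i)}$ in terms of lower derivatives and $s$ using the binomial expansion of $(\tfrac{d}{dt}+\lambda)^{n-1}$, one gets $\tilde x^{(i)}$ bounded by the bound coming from $s$ itself (the ``$1$'' term, scaled by $\lambda^{i-n}\varepsilon$) plus $\sum_{j=0}^{i-1}\binom{i}{j}$ times the already-established bounds on $\tilde x^{(j)}$, which is exactly \eqref{eq:zeta}. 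The main obstacle — and the place I would be most careful — is bookkeeping the powers of $\lambda$ and the binomial factors consistently through this induction so that the final constants land precisely on $\zeta_i\lambda^{i-n}\varepsilon$; the dynamical content (the Lyapunov estimate for $s$) is routine, but matching the exact form of the attractor region $\Omega$ requires disciplined tracking of the filter gains and is where sign/exponent errors most easily creep in.
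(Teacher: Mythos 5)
Your route is essentially correct, but it is genuinely different from the paper's. The paper never introduces a first-order sliding-variable equation: it takes the closed-loop relation (\ref{eq:clwd}) with $|\tilde d|\le\varepsilon$, notes that the binomial gains (\ref{eq:binom}) make the left-hand side $(\tfrac{d}{dt}+\lambda)^n\tilde x$, multiplies by $e^{\lambda t}$ to get (\ref{eq:dbounds}), integrates that inequality $n$ times, and then extracts the bounds on $\tilde x,\dot{\tilde x},\ddot{\tilde x},\ldots$ one by one via the Leibniz expansion of $\tfrac{d^{i}}{dt^{i}}(\tilde x e^{\lambda t})$, which is exactly where the recursion (\ref{eq:zeta}) comes from. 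You instead set $s=(\tfrac{d}{dt}+\lambda)^{n-1}\tilde x$, reduce (\ref{eq:clwd}) to $\dot s=-\lambda s+\tilde d$, get exponential convergence of $s$ by a comparison/Lyapunov estimate, and then push the bound through the stable filter $1/(p+\lambda)^{n-1}$. That works, and in its standard Slotine--Li form it even yields the tighter constants $2^{i}$ in place of $\zeta_i$ (note $\zeta_2=6$, $\zeta_3=26$, versus $4$ and $8$), which of course still implies the stated region $\Omega$; to land exactly on (\ref{eq:zeta}) you would bound $(\tfrac{d}{dt}+\lambda)^{i}\tilde x$ (the $(n-1-i)$-fold filtered $s$, asymptotically $\le\lambda^{i-n}\varepsilon$) and expand it with $\binom{i}{j}$, not the expansion of $(\tfrac{d}{dt}+\lambda)^{n-1}$ as you wrote. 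Two loose ends you should fix: the asymptotic bound on your $s$ is $\varepsilon/\lambda$, not $\varepsilon$ (this is precisely what makes $\zeta_0\lambda^{-n}\varepsilon$ come out after filtering), and your $s$ differs from the paper's combined error measure $s=\mathbf{k}^{\mathrm{T}}\mathbf{\tilde x}$ used in the fuzzy premise --- harmless for the proof, since both you and the paper simply invoke the universal-approximation assumption $|\hat d-d|\le\varepsilon$ independently of the argument of $\hat d$, but the notational clash should be removed.
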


\begin{proof} Considering the universal approximation feature of fuzzy logic \citep{kosko1}, the output of the 
adopted inference system (\ref{eq:dcvector}) can approximate the disturbance $d$ to an arbitrary degree of 
accuracy, \textit{i.\,e}.\ $|\hat{d}(s)-d|\le\varepsilon$ for an arbitrary $\varepsilon>0$. Thus, 
from (\ref{eq:clwd}) one has

\begin{equation}
|\tilde{x}^{(n)}+k_{n-1}\tilde{x}^{(n-1)}+\ldots+k_1\dot{\tilde{x}}+k_0\tilde{x}|\le\varepsilon
\label{eq:bounds1}
\end{equation}

From (\ref{eq:binom}), inequality (\ref{eq:bounds1}) may be rewritten as

\begin{equation}
-\varepsilon\le \tilde{x}^{(n)}+c_{n-1}\lambda\tilde{x}^{(n-1)}+\cdots+c_1\lambda^{n-1}\dot{\tilde{x}}+c_0\lambda^n\tilde{x}\le\varepsilon
\label{eq:bounds2}
\end{equation}

Multiplying (\ref{eq:bounds2}) by $e^{\lambda t}$ yields

\begin{equation}
-\varepsilon e^{\lambda t}\le\frac{d^n}{dt^n}(\tilde{x}e^{\lambda t})\le\varepsilon e^{\lambda t} 
\label{eq:dbounds}
\end{equation}

Integrating (\ref{eq:dbounds}) between $0$ and $t$ gives

\begin{equation}
-\frac{\varepsilon}{\lambda} e^{\lambda t}+\frac{\varepsilon}{\lambda}\le\frac{d^{n-1}}{dt^{n-1}}(\tilde{x}
e^{\lambda t})-\left.\frac{d^{n-1}}{dt^{n-1}}(\tilde{x}e^{\lambda t})\right|_{t=0}\le\frac{\varepsilon}{\lambda} 
e^{\lambda t}-\frac{\varepsilon}{\lambda} 
\label{eq:int_1a}
\end{equation}

\noindent
or conveniently rewritten as

\begin{equation}
-\frac{\varepsilon}{\lambda} e^{\lambda t}-\left(\left|\frac{d^{n-1}}{dt^{n-1}}(\tilde{x}e^{\lambda t})
\right|_{t=0}+\frac{\varepsilon}{\lambda}\right)\le\frac{d^{n-1}}{dt^{n-1}}(\tilde{x}e^{\lambda t})\le
\frac{\varepsilon}{\lambda} e^{\lambda t}+\left(\left|\frac{d^{n-1}}{dt^{n-1}}(\tilde{x}e^{\lambda t})
\right|_{t=0}+\frac{\varepsilon}{\lambda}\right)
\label{eq:int_1b}
\end{equation}

The same reasoning can be repeatedly applied until the $n^\mathrm{th}$ integral of (\ref{eq:dbounds}) is reached:

\begin{multline}
-\frac{\varepsilon}{\lambda^n}e^{\lambda t}-\left(\left|\frac{d^{n-1}}{dt^{n-1}}(\tilde{x}e^{\lambda
t})\right|_{t=0}+\frac{\varepsilon}{\lambda}\right)\frac{t^{n-1}}{(n-1)!}-\cdots-\left(\left|\tilde{x}
(0)\right|+\frac{\varepsilon}{\lambda^n}\right)\le\tilde{x}e^{\lambda t}\le\frac{\varepsilon}{\lambda^n}
e^{\lambda t}+\\+\left(\left|\frac{d^{n-1}}{dt^{n-1}}(\tilde{x}e^{\lambda t})\right|_{t=0}
+\frac{\varepsilon}{\lambda}\right)\frac{t^{n-1}}{(n-1)!}+\cdots+\left(\left|\tilde{x}(0)\right|
+\frac{\varepsilon}{\lambda^n}\right)
\label{eq:int_n}
\end{multline}

Furthermore, dividing (\ref{eq:int_n}) by $e^{\lambda t}$, it can be easily verified that, for $t\to\infty$,

\begin{equation}
-\frac{\varepsilon}{\lambda^n}\le\tilde{x}(t)\le\frac{\varepsilon}{\lambda^n}
\label{eq:txbound}
\end{equation}

Considering the $(n-1)^\mathrm{th}$ integral of (\ref{eq:dbounds})

\begin{multline}
-\frac{\varepsilon}{\lambda^{n-1}}e^{\lambda t}-\left(\left|\frac{d^{n-1}}{dt^{n-1}}(\tilde{x}
e^{\lambda t})\right|_{t=0}+\frac{\varepsilon}{\lambda}\right)\frac{t^{n-2}}{(n-2)!}-\cdots
-\left(\left|\dot{\tilde{x}}(0)\right|+\frac{\varepsilon}{\lambda^{n-1}}\right)\le\frac{d}{dt}
(\tilde{x}e^{\lambda t})\le\frac{\varepsilon}{\lambda^{n-1}}e^{\lambda t}+\\+\left(\left|
\frac{d^{n-1}}{dt^{n-1}}(\tilde{x}e^{\lambda t})\right|_{t=0}+\frac{\varepsilon}{\lambda}\right)
\frac{t^{n-2}}{(n-2)!}+\cdots+\left(\left|\dot{\tilde{x}}(0)\right|+\frac{\varepsilon}{\lambda^{n-1}}\right)
\label{eq:int_n-2}
\end{multline}

\noindent
and noting that $d(\tilde{x}e^{\lambda t})/dt=\dot{\tilde{x}}e^{\lambda t}+\tilde{x}\lambda e^{\lambda
t}$, by imposing the bounds (\ref{eq:txbound}) to (\ref{eq:int_n-2}) and dividing again by $e^{\lambda
t}$, it follows that, for $t\to\infty$,
 
\begin{equation}
-2\frac{\varepsilon}{\lambda^{n-1}}\le\dot{\tilde{x}}(t)\le2\frac{\varepsilon}{\lambda^{n-1}}
\label{eq:txdbound}
\end{equation}

Now, applying the bounds (\ref{eq:txbound}) and (\ref{eq:txdbound}) to the $(n-2)^\mathrm{th}$ integral 
of (\ref{eq:dbounds}) and dividing once again by $e^{\lambda t}$, it follows that, for $t\to\infty$,

\begin{equation}
-6\frac{\varepsilon}{\lambda^{n-2}}\le\ddot{\tilde{x}}(t)\le6\frac{\varepsilon}{\lambda^{n-2}}
\label{eq:txddbound}
\end{equation}

The same procedure can be successively repeated until the bounds for $\tilde{x}^{(n-1)}$ are 
achieved:

\begin{equation}
-\left[1+\sum^{n-2}_{i=0}\binom{n-1}{i}\zeta_i\right]\frac{\varepsilon}{\lambda}\le\tilde{x}^{(n-1)}\le
\left[1+\sum^{n-2}_{i=0}\binom{n-1}{i}\zeta_i\right]\frac{\varepsilon}{\lambda}
\label{eq:txnbound}
\end{equation}

\noindent
where the coefficients $\zeta_i$ ($i=0,1,\ldots,n-2$) are related to the previously obtained bounds 
of each $\tilde{x}^{(i)}$ and can be summarized as in (\ref{eq:zeta}).

In this way, by inspection of the integrals of (\ref{eq:dbounds}), as well as (\ref{eq:txbound}), 
(\ref{eq:txdbound}), (\ref{eq:txddbound}), (\ref{eq:txnbound}) and the other omitted bounds, it follows 
that the tracking error exponentially converges to the $n$-dimensional box determined by the limits 
$|\tilde{x}^{(i)}|\le\zeta_i\lambda^{i-n}\varepsilon,i=0,1,\ldots,n-1$, where $\zeta_i$ is defined by 
(\ref{eq:zeta}). 
\end{proof}

\begin{Corollary}
\label{co:cor1}
It must be noted that the proposed control scheme provides a smaller tracking error when compared with 
the conventional feedback linearization controller. By setting the output of the fuzzy inference system 
to zero, $\hat{d}(\tilde{x})=0$, Theorem~\ref{th:theo1} implies that the resulting bounds are 
$|\tilde{x}^{(i)}|\le\zeta_i\lambda^{i-n}\delta,i=0,1,\ldots,n-1$. Considering that $\varepsilon<\delta$,
from the universal approximation feature of $\hat{d}$, it can be concluded that the tracking error obtained
with the fuzzy feedback linearization controller is smaller than the associated with the conventional scheme.
\end{Corollary}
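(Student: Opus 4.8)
The plan is to establish the Corollary as a direct consequence of Theorem~\ref{th:theo1} by comparing two instances of that theorem: the first applied to the full fuzzy feedback linearization controller, and the second applied to the degenerate case where the fuzzy compensation is switched off. First I would observe that setting $\hat{d}(s)\equiv0$ in the control law \eqref{eq:lawfuzzy} recovers exactly the conventional feedback linearization law \eqref{eq:lawfl}, so the closed-loop error dynamics \eqref{eq:clwd} becomes $\tilde{x}^{(n)}+k_{n-1}\tilde{x}^{(n-1)}+\cdots+k_0\tilde{x}=\tilde{d}=-d$. By Assumption~\ref{as:limd}, $|-d|=|d|\le\delta$, so $\delta$ now plays precisely the role that $\varepsilon$ played in the proof of Theorem~\ref{th:theo1}: the bound \eqref{eq:bounds1} holds with $\delta$ in place of $\varepsilon$, and every subsequent integration step goes through verbatim. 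Hence the conventional scheme yields the ultimate bounds $|\tilde{x}^{(i)}|\le\zeta_i\lambda^{i-n}\delta$ for $i=0,1,\ldots,n-1$, with the same $\zeta_i$ from \eqref{eq:zeta}.

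Next I would invoke the universal approximation property of the TSK fuzzy system \citep{kosko1}, already used in the proof of Theorem~\ref{th:theo1}, which guarantees that the designer may choose the fuzzy sets and output values so that $|\hat{d}(s)-d|\le\varepsilon$ for an \emph{arbitrarily small} prescribed $\varepsilon>0$. In particular one can always select $\varepsilon<\delta$. Applying Theorem~\ref{th:theo1} with this $\varepsilon$ gives, for the fuzzy-augmented controller, the ultimate bounds $|\tilde{x}^{(i)}|\le\zeta_i\lambda^{i-n}\varepsilon$. Since $\zeta_i>0$ and $\lambda^{i-n}>0$, the inequality $\varepsilon<\delta$ propagates multiplicatively to each component: $\zeta_i\lambda^{i-n}\varepsilon<\zeta_i\lambda^{i-n}\delta$ for every $i$. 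Thus the residual box $\Omega$ for the fuzzy scheme is strictly contained in the residual box for the conventional scheme, which is the claimed conclusion.

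The argument is essentially bookkeeping once the key structural observation — that the ``no-fuzzy'' case is literally Theorem~\ref{th:theo1} with $\varepsilon$ replaced by the a priori bound $\delta$ — is made explicit; I would state that observation carefully rather than re-derive the chain of integral inequalities. The only point that warrants a word of caution is the status of $\varepsilon$: it is not an intrinsic constant but a design tolerance, so the comparison is between \emph{what is achievable} with fuzzy compensation versus \emph{what is fixed} without it; the corollary should be read as asserting that the fuzzy scheme can be tuned to outperform the conventional one, uniformly over all error-derivative components. I would make that reading explicit in one sentence so the statement is not overclaimed. No separate obstacle arises beyond this interpretive point.
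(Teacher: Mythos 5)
Your proposal is correct and follows essentially the same reasoning the paper gives for the Corollary: switch off the fuzzy compensation so that the closed-loop error equation is driven by $-d$ with $|d|\le\delta$, rerun Theorem~\ref{th:theo1} with $\delta$ in place of $\varepsilon$ to get the bounds $|\tilde{x}^{(i)}|\le\zeta_i\lambda^{i-n}\delta$, and then use the universal approximation property to take $\varepsilon<\delta$ and conclude the fuzzy scheme's residual bounds are strictly smaller. Your added remark that $\varepsilon$ is a design tolerance rather than an intrinsic constant is a reasonable clarification but does not change the argument.
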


\section{NUMERICAL SIMULATIONS}

Numerical simulations are now in focus exploring the controller capability to perform vibration reduction in 
smart structures. A fourth-order Runge-Kutta scheme is adopted. In all simulations, the material properties 
presented in Table~\ref{Tab1} are used. These values are chosen in order to match experimental data obtained 
by \citet{sittner1} for a Cu-Zn-Al-Ni alloy at 373 K (see Figure~\ref{Fig2}). For the data in Table~\ref{Tab1}, 
the parameters defined in equation~(\ref{eq:motion3}) assume the values:  $\alpha_2=1.240\times10^2$ and 
$\alpha_3=1.450\times10^4$. We further let $b=0.866$, corresponding to a two-bar truss with an initial position 
$\varphi_0=30$\textordmasculine. Sampling rates of $200\Omega/\pi$ for control system and $1000\Omega/\pi$ for 
dynamical model are assumed.

\begin{table}[hb]
\centering
\caption{Material Properties.}
\label{Tab1}
\begin{tabular}{lllll}
\hline\noalign{\smallskip}
$a_1$ (MPa/K) & $a_2$ (MPa) & $a_3$ (MPa) & $T_M$ (K) & $T_A$ (K)\\
\noalign{\smallskip}\hline\noalign{\smallskip}
523.29 & $1.868\times10^7$ & $2.186\times10^9$ & 288 & 364.3\\
\noalign{\smallskip}\hline
\end{tabular}
\end{table}

\begin{figure}[htb]
\centering
\includegraphics[width=0.4\textwidth]{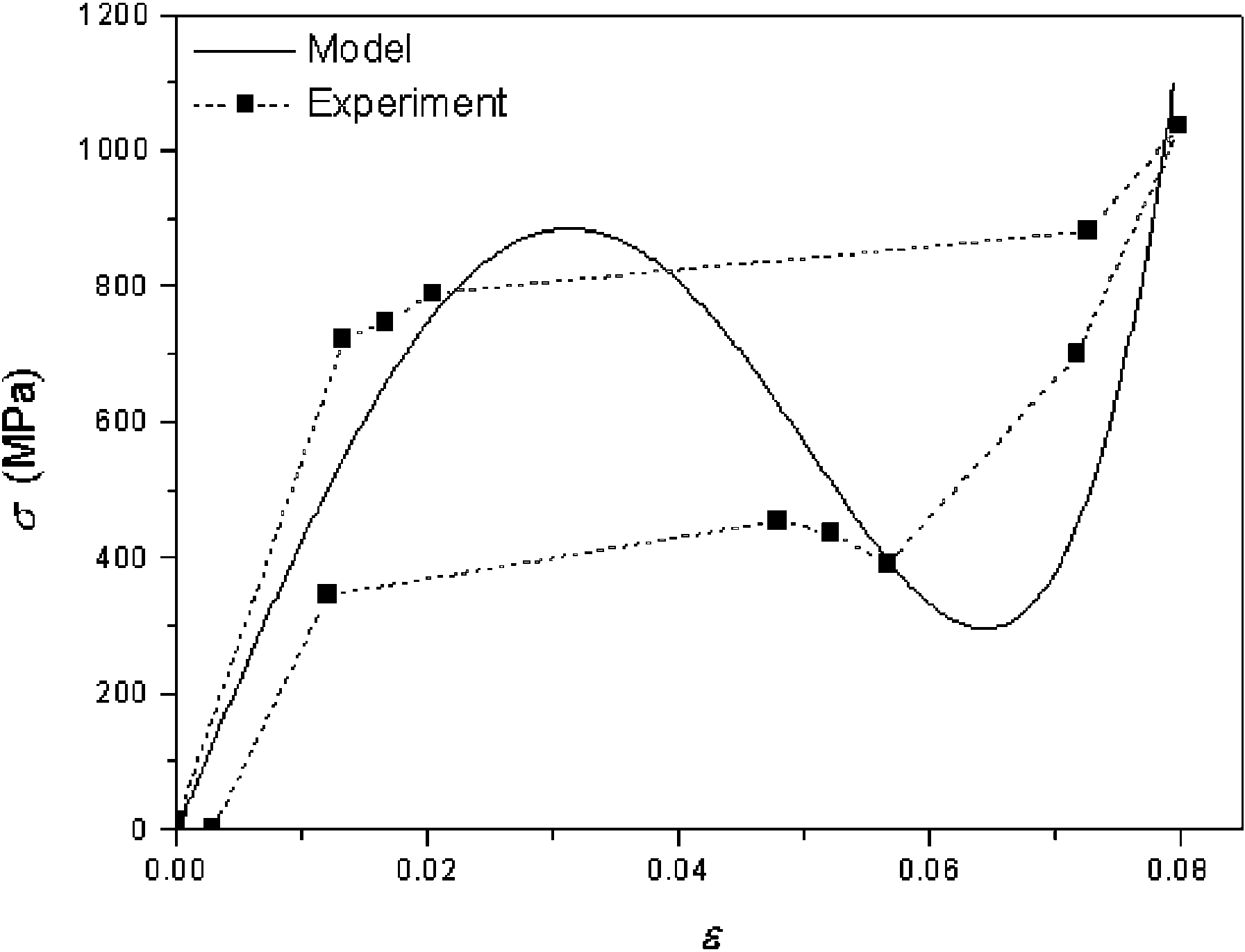}
\caption{Stress-strain curve: experimental and predicted by polynomial model.}
\label{Fig2}
\end{figure}

In order to demonstrate that the adopted control scheme can deal with both modeling inaccuracies and external disturbances, 
an uncertainty of $\pm20\%$ over the values of $\alpha_2$ and $\alpha_3$ is considered. Moreover, the periodic excitation 
is treated as an unknown external disturbance. Under this assumption, $\gamma\sin(\Omega\tau)$ is not taken into account 
within the design of the control law. On this basis, it is assumed estimation values as $\alpha_2=10^2$ and $\alpha_3=
1.15\times10^4$ in the control law. The other estimates in $f$ are chosen based on the assumption that model coefficients 
are perfectly known.  Concerning the fuzzy system, trapezoidal (at the borders) and triangular (in the middle) membership 
functions are adopted for $S_r$, with the central values defined respectively as $C=\{-1.00\:;\:$ $-0.50\:;\:-0.02\:;
\:0.02\:;\:0.50\:;\:1.0\}\times10^{-1}$. The control parameter $\lambda$ is defined as $\lambda=0.6$.

The stabilization of the state vector in the neighborhood of one of the equilibrium points of the shape memory 
two-bar truss \citep{savi5} is carried out. This approach shows that the adopted control scheme can significantly 
reduce the vibration level and also avoid the undesired snap-through behavior. Figures~\ref{Fig3} and~\ref{Fig4} 
show the obtained results considering $\mathbf{x}_d=[0.68,0.0]$, $\theta=0.69$, $\Omega=0.5$, $\xi=0.05$ and 
$\gamma=0.020$. Note that the chaotic behavior with large amplitudes of the uncontrolled respose is replaced by 
a regular behavior with small amplitudes around the equilibrium point. It should be highlighted that the proposed 
control law provides a smaller stabilization error, Figure~\ref{Fig4}, when compared with the conventional feedback 
linearization scheme, Figure~\ref{Fig3}.

\begin{figure}[htb]
\centering
\mbox{
\subfloat[Stabilization of $x$.]{\label{Fig3a}
\includegraphics[width=0.45\textwidth]{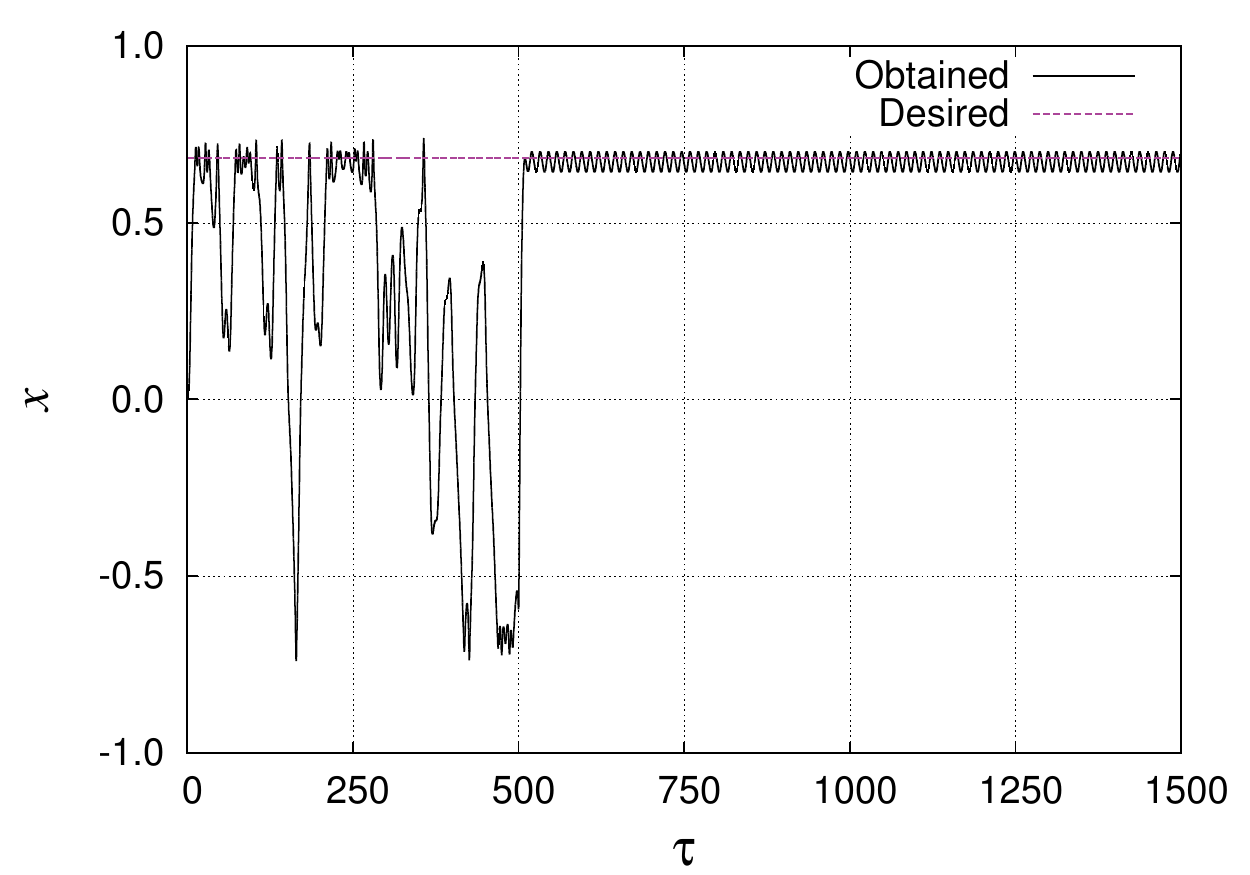}}
\subfloat[Control action $u$.]{\label{Fig3b}
\includegraphics[width=0.45\textwidth]{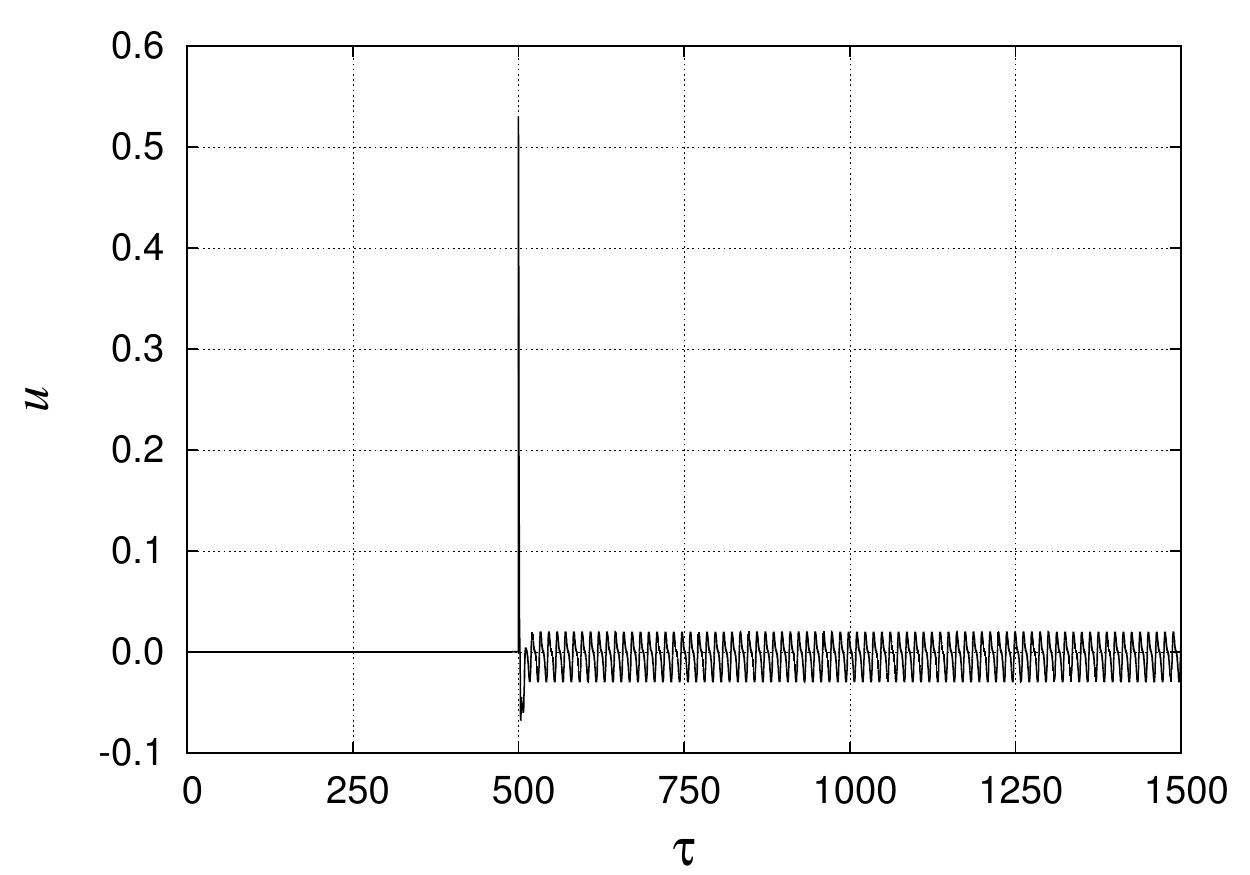}}
}
\caption{Controller performance without fuzzy compensation.}
\label{Fig3}
\end{figure}

\begin{figure}[htb]
\centering
\mbox{
\subfloat[Stabilization of $x$.]{\label{Fig4a}
\includegraphics[width=0.45\textwidth]{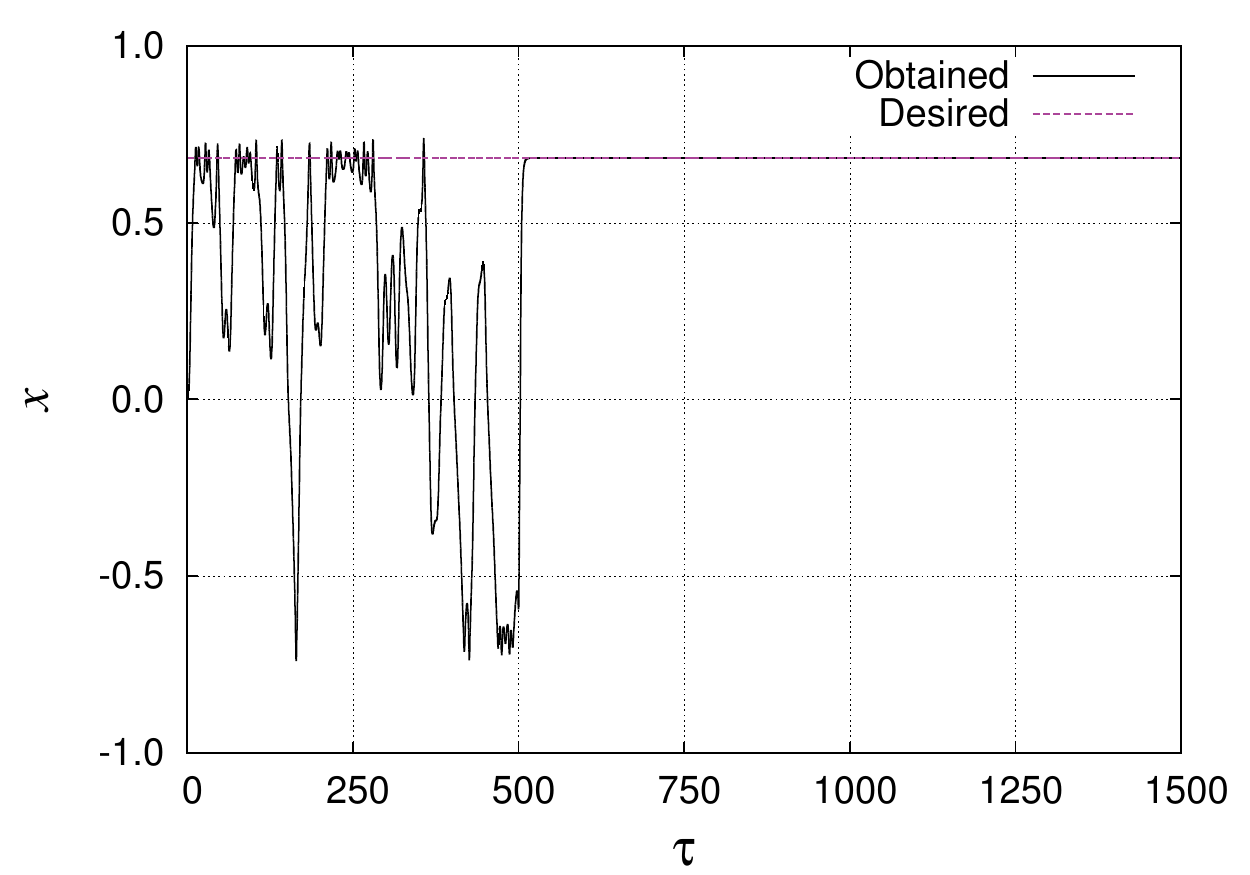}}
\subfloat[Control action $u$.]{\label{Fig4b}
\includegraphics[width=0.45\textwidth]{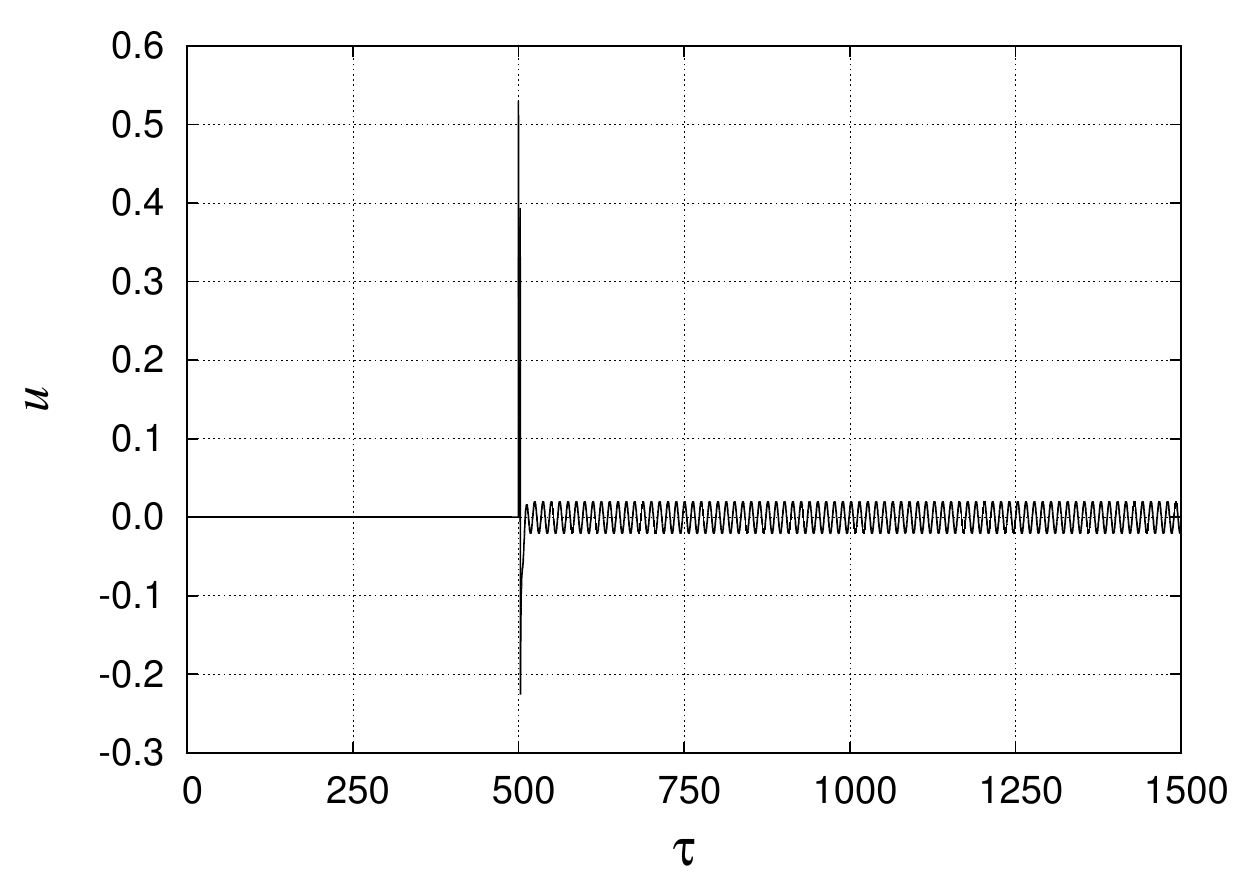}}
}
\caption{Controller performance with fuzzy compensation.}
\label{Fig4}
\end{figure}

\section{CONCLUDING REMARKS}

In this paper, a fuzzy feedback linearization controller is considered for vibration reduction in a shape memory 
two-bar truss. A polynomial constitutive model is assumed to describe the constitutive behavior of the bars. Despite 
the deceiving simplicity, this model allows an appropriate qualitative description of system dynamics, which can  
exhibit chaotic behavior. Numerical simulations show the efficacy of the proposed scheme against modeling inaccuracies 
and external disturbances. The improved performance over the conventional feedback linearization is also demonstrated. 
It should be highlighted that the controller robustness to modeling inaccuracies is an important issue that allows the 
use of a simple constitutive model for control purposes.

\section{ACKNOWLEDGEMENTS}

The authors would like to acknowledge the support of the Brazilian Research Agencies CNPq, CAPES and FAPERJ, 
and through the INCT-EIE (National Institute of Science and Technology - Smart Structures in Engineering)
the CNPq and FAPEMIG. The German Academic Exchange Service (DAAD) and the Air Force Office for Scientific 
Research (AFOSR) are also acknowledged.

\end{document}